\newcommand{\E}{\mathop{\textrm{E}}}
\newcommand{\T}{\textsf{T}}
\newcommand{\q}{\textbf{s}}
  \providecommand\BibTeX{{%
    \normalfont B\kern-0.5em{\scshape i\kern-0.25em b}\kern-0.8em\TeX}}}
\begin{document}

\title{A General Framework for Pairwise Unbiased Learning to Rank}

\author{Alexey Kurennoy}
\email{alexey.kurennoy@zalando.ie}
\affiliation{
	\institution{Zalando}
	\city{Dublin}
	\country{Ireland}}
	
\author{John Coleman}
\email{john.coleman@zalando.ie}
\affiliation{
	\institution{Zalando}
	\city{Dublin}
	\country{Ireland}}
	
\author{Ian Harris}
\email{ian.harris@zalando.ie}
\affiliation{
	\institution{Zalando}
	\city{Dublin}
	\country{Ireland}}
	
\author{Alice Lynch}
\email{alice.lynch@zalando.ie}
\affiliation{
	\institution{Zalando}
	\city{Dublin}
	\country{Ireland}}
	
\author{Oisin Mac Fhearai}
\email{oisin.mac.fhearai@zalando.ie}
\affiliation{
	\institution{Zalando}
	\city{Dublin}
	\country{Ireland}}
	
\author{Daphne Tsatsoulis}
\email{daphne.tsatsoulis@zalando.ie}
\affiliation{
	\institution{Zalando}
	\city{Dublin}
	\country{Ireland}}

\renewcommand{\shortauthors}{Alexey Kurennoy et al.}

\begin{abstract}
	Pairwise debiasing is one of the most effective strategies in reducing position bias in learning-to-rank (LTR) models. However, limiting the scope of this strategy, are the underlying assumptions required by many pairwise debiasing approaches. In this paper, we develop an approach based on a minimalistic set of assumptions that can be applied to a much broader range of user browsing patterns and arbitrary presentation layouts. We implement the approach as a simplified version of the Unbiased LambdaMART and demonstrate that it retains the underlying unbiasedness property in a wider variety of settings than the original algorithm. Finally, using simulations with "golden" relevance labels, we will show that the simplified version compares favourably with the original Unbiased LambdaMART when the examination of different positions in a ranked list is not assumed to be independent.
\end{abstract}

\keywords{ranking, position bias, unbiased learning-to-rank, pairwise debiasing}

\maketitle

\section{Introduction}

Machine learning models for information retrieval and recommendation are typically trained on implicit user feedback (such as clicks, for instance). Implicit feedback has several attractive properties. For example, it is abundant and relatively cheap to obtain. However, it is prone to presentation biases. This means that the implicit feedback corresponding to a certain item depends on the way the item was presented to users during the feedback collection. One important type of presentation bias that is especially pronounced in ranking applications is position bias. This bias arises because user attention is not spread equally between different positions in a ranked list and some of the positions are seen or attract attention more frequently than others.

The position bias renders items that are ranked low by the existing production system as less relevant to users than they really are. Consequently, a machine learning method applied to the collected data tends to mimic the current system. If the current production model is not optimal, its sub-optimality is (at least, partially) passed onto the new model. As a result, making improvements to the existing ranking system becomes more difficult. This has several undesirable implications such as worse user experience, lower revenues, or fairness problems.

There has been a considerable amount of research on ways to eliminate or reduce the position bias in learning-to-rank and recommendation models. Both the offline \cite{Jo:17,Ag:19} and the online \cite{OMa:18} environments have been considered and there exists work aiming to unify the two \cite{OMa:21,Ai:21}. We also refer the reader to a recent survey \cite{Ch:20}. Note that the theme of position bias reduction is different from offline policy evaluation (see \cite{SJo:15**, Li:18, KSa:22} and other references in \cite{KSa}) even though the two domains have some similarities. While offline policy evaluation focuses on assessing the loss that would have been observed under a different policy the goal of debiasing is to estimate the value of the loss that we would have observed if the target signal based on the implicit feedback was not distorted due to the position bias. From this perspective, debiasing is relevant even when the only aim is to evaluate the existing ranking or recommender system (i.\,e. the logging policy).

A lot of papers on the topic of position bias removal, including the seminal work \cite{Jo:17} and its subsequent generalisation \cite{Ag:19}, focused on modifying objective functions that involve summations over individual items in the training data.

Hu et. al. \cite{Hu:19} proposed an alternative approach starting off of a pairwise loss function which is a sum of terms that depend on pairs of items rather than individual items. This approach, called \emph{pairwise debiasing}, gave rise to Unbiased LambdaMART - a state-of-the-art method for unbiased learning-to-rank.

In this paper, we describe a general framework for pairwise unbiased learning-to-rank. In contrast to existing theories, it relies on a smaller and more realistic set of assumptions. Importantly, we do not require that the examination of different positions happen independently. For example, if the user is presented with a number of choices, the independent examination assumption would mean that examining the bottom-most option does not increase in any way the chances that earlier positions have been observed too. However, when users tend to examine the choices in a top-down fashion this property is unlikely to hold.

Furthermore,
we do not assume that the probability of irrelevance and click absence (conditional on item and context features) are proportional at each position. See \cite[Section~4.1.4]{Ai:21} for a discussion of why this assumption is undesirable.

In addition to the above, our framework allows for arbitrary presentation layouts and thus covers both web search where results are typically displayed in a list and e-commerce where users are usually presented with a grid of products.

We demonstrate how the framework can be used in several important contexts to produce unbiased learning-to-rank algorithms.
We also utilise the framework to show that a simplified version of Unbiased LambdaMART maintains the underlying unbiasedness property in a wider range of settings than the original algorithm. We compare the simplified and the original Unbiased LambdaMART in a semi-synthetic experiment and find that the simplified version compares favourably to the original Unbiased LambdaMART when the examination of different positions is not independent.

The contributions of the paper can be summarised as follows.
\begin{itemize}
	\item {\bf Theory} \\ We propose a general pairwise debiasing framework allowing for arbitrary presentation layouts and a broad range of user browsing patterns (including those in which there is dependence in the examination of different positions). To the best of our knowledge, this framework has the weakest set of assumptions to date.

	\item {\bf Methods}
		\begin{itemize} 
			\item We show how the framework can be used to produce unbiased learning-to-rank methods for important types of user behaviour found in e-commerce and web search.
			\item We demonstrate that a simplified version of Unbiased LambdaMART is robust in the sense that the underlying unbiasedness property holds in a broad range of settings.
		\end{itemize}

	\item {\bf Offline Experiments} \\ We conduct an offline semi-synthetic experiment (based on public data with "golden" relevance labels) and find that the simplified version of Unbiased LambdaMART compares favourably with the original algorithm when the examination of different positions does not happen independently.

\end{itemize}

The paper is structured as follows. We discuss related work in Section~\ref{sec:related}. Section~\ref{sec:framework} presents the proposed pairwise debiasing framework. This includes stating its assumptions, formulating a novel unbiased version of the pairwise loss function, and proving its unbiasedness. In Section~\ref{sec:examples}, we demonstrate how our framework can be used in several important settings to produce unbiased learning-to-rank algorithms. Section~\ref{sec:lambda} demonstrates that Unbiased LambdaMART can be modified so that the underlying unbiasedness property holds in a wider range of situations. Finally, Section~\ref{sec:experiments} contains the results of a semi-synthetic experiment in which we compare the performance of the original and the simplified Unbiased LambdaMART.

\section{Related Work}\label{sec:related}

The idea of pairwise debiasing along with the Unbiased LambdaMART method were introduced in \cite{Hu:19}. In addition to the standard examination hypothesis \cite[Section~3.3]{Chu:15} and the positivity of observation propensities, \cite{Hu:19} assumes that clicks on different items happen independently of each other and that the probabilities of irrelevance and click absence are proportional for each position.

A recent work \cite{Sa:20} proposes an unbiased pairwise loss function in the context of collaborative filtering with implicit feedback. It
avoids the assumption about the proportionality between the irrelevance and click absence probabilities but still  
assumes independence between the examination of different positions.
The eye-tracking experiments in \cite{Jo:17*} suggest that people generally view web search results from top to bottom which makes the reliance on the assumption about the examination independence undesirable. When the results are observed in a top-to-bottom fashion, the fact that an item down the list has been observed increases the chances that earlier items have been observed too and hence, the independence of examination indicators cannot hold. As will be seen in Section~\ref{sec:examples}, our framework encompasses \cite{Sa:20} as a special case.

Guo et. al. \cite{Gu:20} build an unbiased learning-to-rank method focusing on the context of e-commerce. In this paper, we develop a unified approach that can tackle both the grid-based e-commerce domain and list-based web search scenarios (see examples in Section~\ref{sec:examples}).

\section{Proposed Framework}\label{sec:framework}

In this section, we formulate a pairwise loss which is unbiased under only two assumptions: the examination hypothesis \cite[Section~3.3]{Chu:15} and the positivity of observation propensities. By the latter, we mean that each of the positions in the layout is observable by users (i.e. there are no positions that can never be examined) and that there are no pairs of positions that can never be examined jointly.

In learning-to-rank, the data is comprised of collections of items. Such collections can be, for example, lists of links returned by a web-search engine, grids of products in an online shop, or the set of elements of a recommendation carousel. We will adopt a common notation and use the letter $q$ to denote a single collection of items from the data. This notation is likely to stem from the fact that each of the collections often has an associated query-string but this is not always the case and plays no role in the context of this research. The set of all item collections in the data will be denoted by $Q$. Furthermore, let $n_q$ stand for the number of items in collection $q\in Q$ and let $x_{q,\,i}$ denote the feature vector associated with the $i$-th item of collection $q\in Q$. The feature vector can include both the item attributes and the properties of the context (such as user and query-string features). Finally, let $r_{q,\,i}$ be the relevance of the item. The item relevance is generally unobserved. Instead, the data contains a target label $c_{q,\,i}$ which is based on the user actions and can be, for example, a click indicator. For the ease of exposition, we will assume that both the relevance and the target label are binary but the results can be easily generalised to the case where the relevance and/or the label take more than two values.
	
\subsection{Assumptions}
	We make the following common assumption about the relationship between the relevance and the observed target label (see, \cite{Chu:15}).

\textbf{Assumption 1 {(the examination hypothesis).}}\label{a:examination}
		\textit{
		The target label equals the true relevance if the user has examined the item and is zero otherwise, i.e.
		\begin{equation*}
			c_{q,\,i} = e_{q,\,i} \cdot r_{q,\,i},\quad \forall\,q\in Q \quad \forall\,i=1,\,\ldots,\,n_q,
		\end{equation*}
		where $e_{q,\,i}$ is the examination indicator.
		}

	Note that we do not assume that the examination indicator and the relevance indicator are independent of each other (i.e. we do not require that the probability of a click be the product of the examination and the relevance probabilities). In personal search, for example, the production system would adapt the ranking with each request and place items that are relevant to the user who submitted the request to more visible positions. In this scenario, being relevant increases the chances of being seen and the independence between relevance and examination does not hold.
	
Next, define the (conditional) examination probabilities
\begin{equation}\label{eq:pk=}
	p_{q,\,i} = P\{e_{q,\,i}=1\mid \mathcal{I}_q\},\quad q\in Q,\quad i=1,\,\ldots,\,n_q,
\end{equation}
and the (conditional) joint examination probabilities
\begin{equation}\label{eq:pij=}
	p_{q,\,i,\,j} = P\{e_{q,\,i}e_{q,\,j}=1\mid \mathcal{I}_q\},\quad q\in Q,\quad i,\,j=1,\,\ldots,\,n_q,
\end{equation}
where $\mathcal{I}_q = \{r_{q,\,1},\,\ldots,\,r_{q,\,n_q},\,x_{q,\,1},\,\ldots,\,x_{q,\,n_q}\}$ is a set of variables we will condition upon in our analysis. See Section~\ref{subsec:estimation} below for an example of how the examination probabilities \eqref{eq:pk=} and \eqref{eq:pij=} can be estimated in practice.

\textbf{Assumption 2.}\label{a:positivity}
 	\textit{The examination probabilities defined in \eqref{eq:pk=} and \eqref{eq:pij=} are non-zero.}

\subsection{Unbiased Pairwise Loss}
Let $f$ be a (ranking) model. A pairwise loss function takes the following form
	\begin{eqnarray}\label{eq:loss}
		L_c &=& \sum_{q}\sum_{i,\,j = 1}^{n_q} \ell(f(x_{q,\,i}),\,c_{q,\,i},\,f(x_{q,\,j}),\,c_{q,\,j}).
	\end{eqnarray}
	In other words, it is the sum of terms that correspond to item pairs in the data. Those terms can be typically decomposed as follows
	\begin{multline}\label{eq:ell}
		\ell(f(x_{q,\,i}),\,c_{q,\,i},\,f(x_{q,\,j}),\,c_{q,\,j}) =
		\\
	 =\ell_{1,\,1}(f(x_{q,\,i}),\,f(x_{q,\,j}))c_{q,\,i}c_{q,\,j} \\
		 + \ell_{1,\,0}(f(x_{q,\,i}),\,f(x_{q,\,j}))c_{q,\,i}(1-c_{q,\,j}) \\
		 + \ell_{0,\,1}(f(x_{q,\,i}),\,f(x_{q,\,j}))(1-c_{q,\,i})c_{q,\,j}\\
		 + \ell_{0,\,0}(f(x_{q,\,i}),\,f(x_{q,\,j}))(1-c_{q,\,i})(1-c_{q,\,j}).
	\end{multline}
	Note that the functions $\ell_{1,\,1}$, $\ell_{1,\,0}$, $\ell_{0,\,1}$, and $\ell_{0,\,0}$ depend only on the scores that the model assigns to the two items in the pair. In the case of RankNet \cite{Bu:10}, for example, the four functions are\footnote{In the derivation of the RankNet algorithm, the functions $\ell_{1,\,1}$ and $\ell_{0,\,0}$ are non-zero but the final algorithm ignores pairs with equal target labels which is equivalent to setting the two functions to zero.}
	\begin{equation}\label{eq:ranknet}
		\begin{array}{l}
			\ell_{1,\,1}(f(x_{q,\,i}),\,f(x_{q,\,j})) = 0, \\
			\ell_{1,\,0}(f(x_{q,\,i}),\,f(x_{q,\,j})) = \log\left(1 + e^{-\sigma\cdot(f(x_{q,\,i}) - f(x_{q,\,j}))}\right), \\
			\ell_{0,\,1}(f(x_{q,\,i}),\,f(x_{q,\,j})) = \log\left(1 + e^{-\sigma\cdot(f(x_{q,\,j}) - f(x_{q,\,i}))}\right), \\
			\ell_{0,\,0}(f(x_{q,\,i}),\,f(x_{q,\,j})) = 0.
		\end{array} 
	\end{equation}
	
	To make the notation more concise, we introduce the following two vector functions:
	\begin{equation} \label{eq:zs}
		\begin{array}{l}
			\mathbf{z}(f(x_{q,\,i}),\,f(x_{q,\,j})) = \left(
				\begin{array}{c}
					\ell_{1,\,1}(f(x_{q,\,i}),\,f(x_{q,\,j})) \\
					\ell_{1,\,0}(f(x_{q,\,i}),\,f(x_{q,\,j})) \\
					\ell_{0,\,1}(f(x_{q,\,i}),\,f(x_{q,\,j})) \\
					\ell_{0,\,0}(f(x_{q,\,i}),\,f(x_{q,\,j}))
				\end{array}
			\right), \\
			\q(b_1,\,b_2) = \left(
				\begin{array}{c}
					b_1b_2 \\ 
					b_1(1-b_2) \\ 
					(1-b_1)b_2\\ 
					(1-b_1)(1-b_2)
				\end{array}
			\right), \quad b_1,\,b_2 \in \{0,\,1\}.
		\end{array}
	\end{equation}
The symbols $b_1$ and $b_2$ in the definition of $\textbf{s}$ stand for two binary indicators. Below we will use either click or relevance indicators in their place. Note that the function $\textbf{s}$ one-hot encodes the type of the item pair. To be more specific, let us consider a given item pair $(i,\,j)$ from collection $q\in Q$. If $\textbf{s}$ is computed from click indicators $c_{q,\,i}$ and $c_{q,\,j}$, it equals $(1,\,0,\,0,\,0)^\T$ when both of the items were clicked, $(0,\,1,\,0,\,0)^\T$ when only the first item from the pair was clicked, $(0,\,0,\,1,\,0)^\T$ when only the second item was clicked, and $(0,\,0,\,0,\,1)^\T$ when none of the two items was clicked. Similarly, if $\textbf{s}$ is computed from relevance indicators it gives analogous one-hot encodings but with respect to the relevance of items $i$ and $j$.

	Using the notation introduced in \eqref{eq:zs}, we can rewrite \eqref{eq:loss} and \eqref{eq:ell} as follows:
	\begin{equation}\label{eq:Lc}
		L = \sum_q \sum_{i,\,j=1}^{n_q} \left(\textbf{z}(f(x_{q,\,i}),\,f(x_{q,\,j}))\right)^\T\q(c_{q,\,i},c_{q,\,j}).
	\end{equation}

The loss in \eqref{eq:Lc} (or, equivalently, in \eqref{eq:loss}) is computed from the implicit feedback (click indicators) as opposed to the true relevance indicators since the latter are unavailable (unobserved). Because of the position bias, the optimisation of \eqref{eq:Lc} generally leads to suboptimal algorithms and may reinforce undesirable properties of existing production systems (such as unfairness). Our goal now will be to construct a matrix (denoted by $\textbf{A}_{q,\,i,\,j}$ below) that can convert the term computed from the click indicators into the same term but computed from the relevance indicators under the expectation sign. In other words, this matrix will enable us to utilise the available click data and ``infer'' how items $i$ and $j$ compare in terms of their  true relevance in expectation. Then by injecting this matrix into the loss \eqref{eq:Lc}, we will obtain a new and unbiased loss function. This new loss will be still computed from the clicks but its expected value will equal that of the loss computed from the true (unobserved) relevance values.

We begin by defining a matrix (function) $\textbf{B}$ depending on two examination indicators,
	\begin{equation} \label{eq:B=}
		\textbf{B}(e_1,\,e_2) = \left(\begin{array}{cccc}
			e_1e_2 & 0 & 0 & 0 \\
			e_1(1-e_2) & e_1 & 0 & 0 \\
			(1-e_1)e_2 & 0, & e_2 & 0 \\
			(1-e_1)(1-e_2) & (1-e_1) & (1-e_2) & 1
		\end{array}\right).
	\end{equation}
For any $q\in Q$ and $i,\,j=1,\,\ldots,\,n_q$, the matrix $\textbf{B}(e_{q,\,i},\,e_{q,\,j})$ relates the terms $\textbf{s}(c_{q,\,i},\,c_{q,\,j})$ and $\textbf{s}(r_{q,\,i},\,r_{q,\,j})$. Specifically, (under Assumption~1) it holds that
\begin{equation*}
	\textbf{s}(c_{q,\,i},\,c_{q,\,j}) = \textbf{B}(e_{q,\,i},\,e_{q,\,j})\cdot\q(r_{q,\,i},\,r_{q,\,j}).
\end{equation*}
For example, if both of the two items are relevant ($r_{q,\,i} = 1$, $r_{q,\,j} = 1$) we have that $\textbf{s}(r_{q,\,i},\,r_{q,\,j}) = \textbf{s}(1,\,1) = (1,\,0,\,0,\,0)^\T$. At the same time, if only the first of the two items gets examined ($e_{q,\,i} = 1$, $e_{q,\,j} = 0$), according to the examination hypothesis, we will only observe a click on the first item and $\textbf{s}(c_{q,\,i},\,c_{q,\,j}) = \textbf{s}(1,\,0) = (0,\,1,\,0,\,0)^\T$. It holds that $\textbf{B}(1,\,0)\cdot \textbf{s}(1,\,1) = \textbf{s}(1,\,0)$, i.\,e. the matrix $\textbf{B}(1,\,0)$ produces what is observed ($\textbf{s}(1,\,0)$) from the underlying relevance-based term $\textbf{s}(1,\,1)$.

In practice, of course, we want to achieve the opposite: reconstruct the unobserved true preference $\textbf{s}(r_{q,\,i},\,r_{q,\,j})$ from the click feedback $\textbf{s}(c_{q,\,i},\,c_{q,\,j})$, at least, in expectation. To that end, we take the inverse of the expectation of $\textbf{B}$ and set
\begin{equation*}
	\textbf{A}_{q,\,i,\,j} = \left(\E[\textbf{B}(e_{q,\,i},\,e_{q,\,j})\mid\mathcal{I}_q]\right)^{-1},\quad q\in Q,\quad i,\,j=1,\,\ldots,\,n_q.
\end{equation*}
As will be seen in the proof of Theorem~\ref{th:1} below, this matrix has the desired property, that is it turns $\textbf{s}(c_{q,\,i},\,c_{q,\,j})$ into $\textbf{s}(r_{q,\,i},\,r_{q,\,j})$ under the expectation sign.

Before formulating our unbiased loss and proving its unbiasedness, we note that although the definition of $\textbf{A}_{q,\,i,\,j}$ above involves matrix inversion, this matrix can be computed directly and efficiently from the examination probabilities \eqref{eq:pk=} and \eqref{eq:pij=}. Specifically,

\begin{equation}\label{eq:A=}
	\textbf{A}_{q,\,i,\,j} = \left(
				\begin{array}{cccc}
					a_{q,\,i,\,j} & 0 & 0 & 0 \\
			a_{q,\,i} - a_{q,\,i,\,j} & a_{q,\,i} & 0 & 0 \\
			a_{q,\,j} - a_{q,\,i,\,j} & 0, & a_{q,\,j} & 0 \\
			1 - a_{q,\,i} - a_{q,\,j} + a_{q,\,i,\,j} & 1-a_{q,\,i} & 1-a_{q,\,j} & 1
				\end{array}
			\right),
\end{equation}
where $a_{q,\,i}$ and $a_{q,\,j}$ are the inverses of the individual examination probabilities \eqref{eq:pk=},
\begin{equation*}
	a_{q,\,i}=1/p_{q,\,i},\quad a_{q,\,j}=1/p_{q,\,j},\quad i,\,j=1,\,\ldots,\,n_q,
\end{equation*}
and $a_{q,\,i,\,j}$ is the inverse of the joint examination probability \eqref{eq:pij=},
\begin{equation*}
	a_{q,\,i,\,j}=1/p_{q,\,i,\,j},\quad i,\,j=1,\,\ldots,\,n_q.
\end{equation*}

We state that the following loss is unbiased,
\begin{equation}\label{eq:Lu}
		L_u = \sum_q \sum_{i,\,j=1}^{n_q} \left(\textbf{z}(f(x_{q,\,i}),\,f(x_{q,\,j}))\right)^\T\textbf{A}_{q,\,i,\,j}\q(c_{q,\,i},c_{q,\,j}).
	\end{equation}

\begin{theorem} \label{th:1}
	Under assumptions~1--2, the loss defined by \eqref{eq:A=}--\eqref{eq:Lu}, is unbiased, i.e.
	\begin{equation}\label{eq:unbiased}
		\E[L_u] = \E\left[\sum_q \sum_{i,\,j=1}^{n_q} \left(\textbf{z}(f(x_{q,\,i}),\,f(x_{q,\,j}))\right)^\T\q(r_{q,\,i},r_{q,\,j})\right].
	\end{equation}
\end{theorem}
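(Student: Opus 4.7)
The plan is to prove the identity by conditioning on $\mathcal{I}_q$ and reducing the problem to the matrix identity
$\textbf{A}_{q,\,i,\,j}\cdot \E[\textbf{B}(e_{q,\,i},\,e_{q,\,j})\mid\mathcal{I}_q] = I$,
which is the defining property of $\textbf{A}_{q,\,i,\,j}$. First I would verify the case-by-case identity
$\textbf{s}(c_{q,\,i},\,c_{q,\,j}) = \textbf{B}(e_{q,\,i},\,e_{q,\,j})\cdot\textbf{s}(r_{q,\,i},\,r_{q,\,j})$
under Assumption~1. Because $c_{q,\,k}=e_{q,\,k}r_{q,\,k}$, the only non-trivial check is to enumerate the four possible values of $(r_{q,\,i},\,r_{q,\,j})\in\{0,1\}^2$, read off the resulting one-hot encoding $\textbf{s}(r_{q,\,i},\,r_{q,\,j})$, and note that multiplication by $\textbf{B}(e_{q,\,i},\,e_{q,\,j})$ picks out the appropriate column whose entries are precisely the indicators of the four possible outcomes $(c_{q,\,i},\,c_{q,\,j})$; the example given in the text after \eqref{eq:B=} is the representative case.

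Next I would apply the tower property, writing $\E[L_u] = \sum_q \sum_{i,j}\E\bigl[\E[\,\cdot\mid\mathcal{I}_q]\bigr]$. Since the feature vectors and relevances are all in $\mathcal{I}_q$, both $\textbf{z}(f(x_{q,\,i}),\,f(x_{q,\,j}))$ and $\textbf{A}_{q,\,i,\,j}$ (a deterministic function of the examination propensities conditional on $\mathcal{I}_q$) can be pulled out of the inner expectation. Using the identity from the previous paragraph and the fact that $\textbf{s}(r_{q,\,i},\,r_{q,\,j})$ is $\mathcal{I}_q$-measurable, the inner expectation reduces to
\begin{equation*}
\E[\textbf{s}(c_{q,\,i},\,c_{q,\,j})\mid\mathcal{I}_q] = \E[\textbf{B}(e_{q,\,i},\,e_{q,\,j})\mid\mathcal{I}_q]\cdot\textbf{s}(r_{q,\,i},\,r_{q,\,j}).
\end{equation*}
Left-multiplying by $\textbf{A}_{q,\,i,\,j}$ then collapses this to $\textbf{s}(r_{q,\,i},\,r_{q,\,j})$, after which reversing the tower step yields exactly the right-hand side of \eqref{eq:unbiased}.

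It remains to justify that $\textbf{A}_{q,\,i,\,j}$ is well-defined and that the explicit formula \eqref{eq:A=} is correct. Computing $\E[\textbf{B}(e_{q,\,i},\,e_{q,\,j})\mid\mathcal{I}_q]$ entrywise using \eqref{eq:pk=}--\eqref{eq:pij=} gives a lower-triangular matrix with diagonal $(p_{q,\,i,\,j},\,p_{q,\,i},\,p_{q,\,j},\,1)$; by Assumption~2 these entries are all strictly positive, so the matrix is invertible. The formula \eqref{eq:A=} can then be checked directly by multiplying the claimed $\textbf{A}_{q,\,i,\,j}$ against this expectation and observing that each column of the product collapses to the corresponding standard basis vector via the identities $p_{q,\,i}a_{q,\,i}=p_{q,\,j}a_{q,\,j}=p_{q,\,i,\,j}a_{q,\,i,\,j}=1$.

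The most tedious part is the last column computation (the bottom-left entry involves four cancellations), but there is no genuine conceptual obstacle: once the one-hot encoding identity $\textbf{s}(c_{q,\,i},\,c_{q,\,j}) = \textbf{B}(e_{q,\,i},\,e_{q,\,j})\textbf{s}(r_{q,\,i},\,r_{q,\,j})$ is in hand, the unbiasedness is purely a consequence of conditioning on $\mathcal{I}_q$ and inverting a $4\times 4$ lower-triangular matrix whose invertibility is guaranteed by the positivity assumption.
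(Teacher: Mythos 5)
Your proposal is correct and follows essentially the same route as the paper's proof: establish the pointwise identity $\q(c_{q,\,i},\,c_{q,\,j})=\textbf{B}(e_{q,\,i},\,e_{q,\,j})\q(r_{q,\,i},\,r_{q,\,j})$ from Assumption~1, compute $\E[\textbf{B}\mid\mathcal{I}_q]$ entrywise, verify $\textbf{A}_{q,\,i,\,j}\E[\textbf{B}\mid\mathcal{I}_q]=\textbf{I}_4$ using Assumption~2, and conclude via the tower property after pulling the $\mathcal{I}_q$-measurable factors out of the inner conditional expectation. Your additional explicit verification of the lower-triangular structure and the cancellation in the bottom-left entry is a detail the paper leaves implicit, but it introduces no new idea.
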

\begin{proof}
	First note that Assumption~1 implies that for all $q\in Q$ and all $i,\,j=1,\,\ldots,\,n_q$, the terms $\q(c_{q,\,i},\,c_{q,\,j})$ and $\q(r_{q,\,i},\,r_{q,\,j})$ are related as follows,
	\begin{equation}\label{eq:qc=Bqr}
		\q(c_{q,\,i},\,c_{q,\,j}) = \textbf{B}(e_{q,\,i},\,e_{q,\,j})\cdot\q(r_{q,\,i},\,r_{q,\,j})
	\end{equation}
	with the matrix function $\textbf{B}$ defined in \eqref{eq:B=}.
	
	Next, for any $q$ and any $i,\,j=1,\,\ldots,\,n_q$ we have that
	\begin{multline}\label{eq:EB=}
		\E[\textbf{B}(e_{q,\,i},\,e_{q,\,j})\mid\mathcal{I}_q] =\\=  		\left(\begin{array}{cccc}
			p_{q,\,i,\,j} & 0 & 0 & 0 \\
			p_{q,\,i} - p_{q,\,i,\,j} & p_{q,\,i} & 0 & 0 \\
			p_{q,\,j} - p_{q,\,i,\,j} & 0, & p_{q,\,j} & 0 \\
			1 - p_{q,\,i} - p_{q,\,j} + p_{q,\,i,\,j} & 1-p_{q,\,i} & 1-p_{q,\,j} & 1.
		\end{array}\right)
	\end{multline}
	Under Assumption~2, the matrix $\textbf{A}_{q,\,i,\,j}$ given by \eqref{eq:A=} is well-defined. Combining its definition with \eqref{eq:EB=}, we compute that
	\begin{equation}\label{eq:identity}
		\textbf{A}_{q,\,i,\,j}\E[\textbf{B}(e_{q,\,i},\,e_{q,\,j})\mid\mathcal{I}_q] = \textbf{I}_4 \quad \forall\,q\in Q\quad \forall\,i,\,j=1,\,\ldots,\,n_q,
	\end{equation}
	where $\textbf{I}_4$ is the identity matrix of size 4. Therefore, for all $q\in Q$ and all $i,\,j=1,\,\ldots,\,n_q$ we have that 
	\begin{eqnarray*}
		&& \E\left[\left(\textbf{z}(f(x_{q,\,i}),\,f(x_{q,\,j}))\right)^\T\textbf{A}_{q,\,i,\,j}\q(c_{q,\,i},c_{q,\,j})\right] = \\
		&=& \E\left[\E\left[\left(\textbf{z}(f(x_{q,\,i}),\,f(x_{q,\,j}))\right)^\T\textbf{A}_{q,\,i,\,j}\q(c_{q,\,i},c_{q,\,j})\mid \mathcal{I}_q\right]\right]\\
		&=& \E\left[\left(\textbf{z}(f(x_{q,\,i}),\,f(x_{q,\,j}))\right)^\T\textbf{A}_{q,\,i,\,j}\E\left[\q(c_{q,\,i},c_{q,\,j})\mid \mathcal{I}_q\right]\right]\\
		&=& \E\Big[\left(\textbf{z}(f(x_{q,\,i}),\,f(x_{q,\,j}))\right)^\T\\
		&& {} \times \textbf{A}_{q,\,i,\,j}\E\left[\textbf{B}(e_{q,\,i},\,e_{q,\,j})\mid\mathcal{I}_q\right]\q(r_{q,\,i},\,r_{q,\,j})\Big]\\
		&\stackrel{eq.\,\eqref{eq:identity}}{=}& \E\left[\left(\textbf{z}(f(x_{q,\,i}),\,f(x_{q,\,j}))\right)^\T\q(r_{q,\,i},r_{q,\,j})\right],
	\end{eqnarray*}
	which, together with the definition of $L_u$ (see formula \eqref{eq:Lu}), implies the unbiasedness property \eqref{eq:unbiased}.
\end{proof}

The theorem above says that in expectation the loss function that we introduced by \eqref{eq:A=}--\eqref{eq:Lu} equals the loss computed from the true (unobserved) relevance indicators.

Interestingly, the matrix $\E[\textbf{B}(e_{q,\,i},\,e_{q,\,j})\mid\mathcal{I}_q]$ in \eqref{eq:EB=} admits the following interpretation. If we categorise the pair of items $(i,\,j)$ as belonging to one of the following four types: type 1 ($r_{q,\,i}=1,\,r_{q,\,j}=1$), type 2 ($r_{q,\,i}=1,\,r_{q,\,j}=0$), type 3 ($r_{q,\,i}=0,\,r_{q,\,j}=1$), or type 4 ($r_{q,\,i}=0,\,r_{q,\,j}=0$) then the element of the matrix $\E[\textbf{B}(e_{q,\,i},\,e_{q,\,j})\mid\mathcal{I}_q]$ in row $t_1$ and column $t_2$ is the probability of a pair of type $t_2$ to appear as a pair of type $t_1$ in the click feedback. It means that this matrix consists of the probabilities of pair type distortion due to the presence of position bias. The matrix $\textbf{A}_{q,\,i,\,j}$ used in the definition of the unbiased loss \eqref{eq:Lu} is the inverse of $\E[\textbf{B}(e_{q,\,i},\,e_{q,\,j})\mid\mathcal{I}_q]$ but as we explained before, it can be computed directly (without matrix inversion).

There is another observation we would like to make. The terms $\ell_{1,\,1}(f(x_{q,\,i}),\,f(x_{q,\,j}))$ and $\ell_{0,\,0}(f(x_{q,\,i}),\,f(x_{q,\,j}))$ in \eqref{eq:ell} are usually set to zero (as in \eqref{eq:ranknet}, for example). In the case of the conventional pairwise loss \eqref{eq:loss}--\eqref{eq:ell} (or, equivalently, \eqref{eq:Lc}) it means that the loss is computed only from those pairs of items in which exactly one of the items was clicked. The complexity of computing the contribution of collection $q$ to the loss is then $O(C_{q}\cdot N_{q})$ where $C_q$ and $N_q$ are the numbers of clicked and non-clicked items of collection $q$, respectively. In contrast, in the case of the unbiased loss \eqref{eq:Lu}, not all of the remaining pairs will be eliminated because of the presence of the matrix $A_{q,\,i,\,j}$ in the formula. Therefore, it is legitimate to question whether the computation of the unbiased loss has a higher complexity. Luckily, this is not the case as we will explain now. Note that all the elements of the last column of $A_{q,\,i,\,j}$ except the bottom-most element are zero. It implies that as long as $\ell_{0,\,0}(f(x_{q,\,i}),\,f(x_{q,\,j}))=0$, the pairs of items in which none of the items was clicked are still eliminated and the contribution of collection $q$ to the unbiased loss can be computed in $O(C_q^2 + C_q\cdot N_q)$ operations, which is $O(C_{q}\cdot N_{q})$ in the typical case when the number of clicked items $C_q$ is lower than the number of non-clicked items $N_q$. Thus, the complexity of computing the unbiased loss \eqref{eq:Lu} is the same as the complexity of computing its conventional (biased) counterpart.

\subsection{Estimation of Examination Probabilities}\label{subsec:estimation}
We consider the estimation of examination probabilities \eqref{eq:pk=} and \eqref{eq:pij=} as a separate big topic that is out of scope of this paper. However, in this section we briefly discuss how those probabilities can be estimated in practice.

When it comes to estimating the individual examination probabilities \eqref{eq:pk=}, multiple methods have been suggested in the literature, e.g. \cite{Wa:16,Jo:17,Fa:18,Ag:19*,Wa:18}. To the best of our knowledge, the estimation of joint examination probabilities \eqref{eq:pij=} has not been considered but as will be seen from examples in Section~\ref{sec:examples} below, in many important cases the joint examination probability $p_{q,\,i,\,j}$ can be either computed from the individual examination probabilities $p_{q,\,i}$ and $p_{q,\,j}$ or expressed in terms of the same parameters. In such cases, the estimation of the joint examination probabilities does not pose an additional problem. That being said, estimating the joint examination probabilities directly is also possible. We will demonstrate that by describing a procedure for their estimation which is analogous to \cite{Jo:17}.  

The method for the estimation of individual examination probabilities in \cite{Jo:17} assumes that the examination probability depends only on the position where the respective item was placed, that is
\begin{equation}\label{eq:theta}
	p_{q,\,i} = \theta(rank_{q,\,i})\quad \forall\,q \quad \forall\,i=1,\,\ldots,\,n_q,
\end{equation}
where $\theta$ is some function that maps ranks to the associated examination probabilities (observation propensities). Suppose that an analogous property holds for the joint examination probabilities,  i.\,e.
\begin{equation*}
	p_{q,\,i,\,j} = \psi(rank_{q,\,i},\,rank_{q,\,j})\quad \forall\,q \quad \forall\,i,\,j=1,\,\ldots,\,n_q,
\end{equation*}
with some function $\psi$ depending on a pair of ranks (positions). To estimate the value of $\psi$ for ranks $k_1$ and $k_2$ ($k_1 < k_2$) we can do the following. Whenever our existing ranker or recommender receives a user request (query) we randomly decide whether to swap the pair of items in positions $k_1$ and $k_2$ with the pair of items in positions 1 and 2. Let $i(k_1)$ and $i(k_2)$ be the items that the ranker assigns to positions $k_1$ and $k_2$, respectively. Then if we do not do the swap
\begin{equation}\label{eq:pnsw}
	P_{no-swap}\{c_{i(k_1)}c_{i(k_2)}=1\} = \psi(k_1,\,k_2) P_{no-swap}\{r_{i(k_1)}r_{i(k_2)}=1\}
\end{equation}
and if we do the swap
\begin{equation}\label{eq:psw}
	P_{swap}\{c_{i(k_1)}c_{i(k_2)}=1\} = \psi(1,\,2) P_{swap}\{r_{i(k_1)}r_{i(k_2)}=1\}.
\end{equation}
At the same time, since swapping has no effect on the intrinsic relevance of the swapped items we have that
\begin{equation}\label{eq:pnsw=psw}
	P_{no-swap}\{r_{i(k_1)}r_{i(k_2)}=1\} = P_{swap}\{r_{i(k_1)}r_{i(k_2)}=1\}.
\end{equation}
From \eqref{eq:pnsw}--\eqref{eq:pnsw=psw} we conclude that
\begin{equation}\label{eq:=fracpsi}
	\frac{P_{no-swap}\{c_{i(k_1)}c_{i(k_2)}=1\}}{P_{swap}\{c_{i(k_1)}c_{i(k_2)}=1\}} = \frac{\psi(k_1,\,k_2)}{\psi(1,\,2)}.
\end{equation}
Finally, if it is fair to assume that the first position is always examined the probability of examining both positions 1 and 2 equals the probability of examining position 2, that is
\begin{equation} \label{eq:psi12=}
	\psi(1,\,2) = \theta(2).
\end{equation}
Hence, having a consistent estimate of the individual examination probability at position 2, $\hat\theta(2)$, we can construct an estimate of the joint examination probability at positions $k_1$ and $k_2$ as follows,
\begin{equation*}
\hat\phi(k_1,\,k_2) = \hat\theta(2)\cdot\frac{\sum_{q\in Q_{no-swap}} c_{i(k_1)}c_{i(k_2)} / |Q_{no-swap}|}{\sum_{q\in Q_{swap}} c_{i(k_1)}c_{i(k_2)} / |Q_{swap}|}
\end{equation*}
which is consistent due to \eqref{eq:=fracpsi}, \eqref{eq:psi12=}, and the consistency of $\hat\theta(2)$.

Note that when the joint examination probabilities are estimated directly the number of parameters is quadratic in the number of positions but this can be mitigated by doing the estimation only for some of the position pairs and extrapolating on the rest.

\section{Examples}\label{sec:examples}

In this section, we will demonstrate how the proposed general framework can be used to derive unbiased pairwise loss functions in practice. We show by example that the framework allows us to focus on computing the examination probabilities and that once we compute them, an unbiased learning-to-rank method for the corresponding setting gets produced “automatically”.

At this point, we will make the common simplifying assumption \eqref{eq:theta}, i.\,e. suppose that the individual examination probability depends only on the position where the item is displayed. Contrary to the discussion at the end of Section~\ref{sec:framework}, we will not need to maintain a similar assumption for the joint examination probabilities because depending on the considered model of user browsing behaviour, we will be able either to compute the joint examination probabilities from the individual ones or to express them through the same parameters. We consider three concrete examples of user browsing behaviour below.

\textbf{Independent Examination.} If the examination of different positions in the layout happens independently, it holds that
\begin{equation}\label{eq:indep}
	p_{q,\,i,\,j} = \theta(rank_{q,\,i})\theta(rank_{q,\,j}).
\end{equation}
It can be easily checked that under this assumption the unbiased pairwise loss \eqref{eq:Lu} recovers the one proposed in \cite[Section~3.1]{Sa:20} (and hence, the framework from \cite{Sa:20} can be considered a special case of ours).

\textbf{Continuous Examination.} In this browsing model, users observe items continuously from positions with smaller ranks to positions with higher ranks without skipping. Eye-tracking experiments in \cite{Jo:17*} give some evidence suggesting that this may generally hold in web-search. In the case of such no-skipping behaviour,
\begin{equation*}\label{eq:cont}
	p_{q,\,i,\,j} = \min\{\theta(rank_{q,\,i}),\,\theta(rank_{q,\,j})\}.
\end{equation*}
An application of our framework \eqref{eq:A=}--\eqref{eq:Lu} gives an unbiased pairwise loss for this setting which, to the best of our knowledge, has not been proposed in the literature before. This loss can be optimised (e.g. by means of gradient descent) to obtain unbiased learning-to-rank models for the continuous examination setting.

\textbf{Row skipping.} 
Xie et al \cite{Xi:19} analyse the behaviour of users presented with a grid layout. One of the browsing models they introduce is called "row skipping". As the name suggests, it captures the tendency of users to skip over rows when going through the grid layout. Specifically, it is assumed that at the start or after each row the user skips the next row with probability $\gamma$, otherwise they browse items in the row. After examining an item at rank $v$ the user may stop browsing with probability $(1-C_v)$. In this model, the function $\theta$ that maps ranks to their associated examination probabilities takes the following form,

\begin{eqnarray}
	\theta(u) &=& \prod_{m=1}^{row(u)-1}\left((1-\gamma)\prod_{v'=S(m)+1}^{S(m)+N(m)}C_{v'} + \gamma\right)\nonumber\\&& {} \times(1-\gamma)\prod_{v=S(row(u))+1}^{u-1}C_v,\label{eq:th=}
\end{eqnarray}
where $row(u)$ is the row containing the rank $u$, $N(m)$ is the number of items in row $m$, and $S(m)$ is the total number of items before row $m$. In the above formula, the product before the $\times$ sign is the probability that the user does not quit before reaching the row containing the rank $u$. Then $(1-\gamma)$ accounts for the chance of skipping the respective row and the last product is the probability that the user does not stop before reaching the rank $u$ when examining items in the row. The joint examination probability $p_{q,\,i,\,j}$ can be expressed in a similar fashion. Specifically, denoting $\min\{rank_{q,\,i},\,rank_{q,\,j}\}$ by $h_{q,\,i,\,j}$ and $\max\{rank_{q,\,i},\,rank_{q,\,j}\}$ by $w_{q,\,i,\,j}$, we can write

\begin{eqnarray}
	p_{q,\,i,\,j} &=& \theta(h_{q,\,i,\,j})\prod_{v=h_{q,\,i,\,j}}^{S(row(h_{q,\,i,\,j}))+N(row(h_{q,\,i,\,j}))}C_v
	\nonumber\\
	&& {} \times \prod_{m=row(h_{q,\,i,\,j})+1}^{row(w_{q,\,i,\,j})-1}\left((1-\gamma)\prod_{v'=S(m)+1}^{S(m)+N(m)}C_{v'} + \gamma\right)\nonumber\\&& {} \times(1-\gamma)\prod_{v=S(row(w_{q,\,i,\,j}))+1}^{w_{q,\,i,\,j}-1}C_v,\label{eq:diffrows}
\end{eqnarray}
if items $i$ and $j$ were displayed in different rows and 

\begin{equation}\label{eq:samerow}
	p_{q,\,i,\,j} =\theta(h_{q,\,i,\,j})\prod_{v=h_{q,\,i,\,j}}^{w_{q,\,i,\,j}-1}C_v
\end{equation}
otherwise. The function $\theta$ in \eqref{eq:diffrows} and \eqref{eq:samerow} is the one from \eqref{eq:th=}.

Assuming that the parameters $\gamma$ and $C_v$ are known or have been estimated, one can plug them in \eqref{eq:th=}--\eqref{eq:samerow} and construct an unbiased pairwise loss according to \eqref{eq:A=} and \eqref{eq:Lu}. This loss can be then optimised to obtain an unbiased ranker for the case of row-skipping behaviour. This is another example of how the framework from Section~\ref{sec:framework} allows us to focus on computing the examination probabilities and an unbiased learning-to-rank method for the corresponding setting gets produced ``automatically'' - just by plugging them into \eqref{eq:A=} and \eqref{eq:Lu}.

\section{Robust Unbiased LambdaMART}\label{sec:lambda}

The unbiased loss proposed in Section~\ref{sec:framework} can be combined with the so called ``lambda-trick'' \cite{Bu:10} similarly to how it was done in \cite{Hu:19}. The lambda-trick consists of re-weighting the item pairs in the gradient of the loss function (such as \eqref{eq:Lu}) so that the optimisation process performs better at maximising an information retrieval (IR) metric (such as NDCG). The weights are set to the absolute difference $|\Delta Z_{i,\,j}|$ in the respective IR metric when the two items of the pair are swapped in the ranking induced by the current values of model parameters. The re-weighted gradient of the underlying pairwise loss function is called \emph{lambda-gradient}.

As will be seen from the derivations below, the application of the lambda-trick to the unbiased pairwise loss \eqref{eq:Lu} from Section~\ref{sec:framework} generates an interesting insight. Specifically, the resulting algorithm turns out to be the same regardless of the values of the joint examination probabilities $p_{q,\,i,\,j}$.
It means that the algorithm arising this way is valid as long as the examination hypothesis holds and the observation propensities are positive.
In particular, this is true regardless of the specific user behaviour patterns (such as the ones discussed in the previous section).
This is especially interesting given that the version of Unbiased LambdaMART stemming from \eqref{eq:Lu} is simpler than the original Unbiased LambdaMART in the sense that it does not have some of its parameters.

Similarly to \cite{Bu:10} and \cite{Hu:19}, we will set the loss values as in formula \eqref{eq:ranknet}. To write down an expression for the lambda-gradient based on our unbiased loss \eqref{eq:Lu}, we first compute the gradient of \eqref{eq:Lu} (with $\textbf{z}$ defined by \eqref{eq:ranknet}). We denote
\begin{equation*}
	\mu_{q,\,i,\,j} = \frac{1}{1 + e^{f(x_{q,\,i}) - f(x_{q,\,j})}}.
\end{equation*}
With this notation, the gradient equals
\begin{multline}\label{eq:grad}
	\left(L_u\right)'_{q,\,i} = \sum_{j=1}^{n_q}\Bigg[\\
  \left(-\sigma\mu_{q,\,i,\,j}(a_{q,\,i}-a_{q,\,i,\,j}) + \sigma\mu_{q,\,j,\,i} (a_{q,\,j}-a_{q,\,i,\,j})\right)c_{q,\,i}c_{q,\,j}\\
  {} - \sigma\mu_{q,\,i,\,j}a_{q,\,i}c_{q,\,i}(1-c_{q,\,j}) + \sigma\mu_{q,\,j,\,i}a_{q,\,j}(1-c_{q,\,i})c_{q,\,j}\Bigg].
\end{multline}
It can be seen that the gradient depends not only on pairs with different target labels but also on pairs of items that were both clicked. However, when we proceed with the lambda-trick the contribution of such pairs gets eliminated since their respective $|\Delta Z_{i,\,j}|$ is zero. This gives the following formula for the lambda-gradient.
\begin{multline}\label{eq:lambda}
	\lambda_{q,\,i} = \sum_{j=1}^{n_q}\left(
  \frac{\lambda_{q,\,i,\,j}}{p_{q,\,i}}c_{q,\,i}(1-c_{q,\,j}) - \frac{\lambda_{q,\,j,\,i}}{p_{q,\,j}}(1-c_{q,\,i})c_{q,\,j}\right),
\end{multline}
where $\lambda_{q,\,i,\,j}=-\sigma\mu_{q,\,i,\,j}$ and $\lambda_{q,\,j,\,i}=-\sigma\mu_{q,\,j,\,i}$. This version of Unbiased LabmdaMART is compared with the original LambdaMART and Unbiased LambdaMART in Table~\ref{tbl:lambda}. It can be seen that the difference between formula \eqref{eq:lambda} and the original Unbiased LambdaMART is the absence of $t^-$ parameters\footnote{In Unbiased LambdaMART \cite{Hu:19}, the parameter $t^-_{k}$ is defined as the ratio between the probability of click absence and the probability of irrelevance (at rank $k$).}.

\begin{table}
	\caption{Pair Contribution to the Lambda-Gradient (index $q$ is omitted)}\label{tbl:lambda}
	\begin{tabular}{lcc}
		\multicolumn{1}{c}{Method}   & \multicolumn{2}{c}{Pair Type}\\

		{} & $c_i>c_j$ & $c_i<c_j$ \\
		\midrule
		 LambdaMART \cite{Bu:10} & $\lambda_{i,\,j}$ & $-\lambda_{j,\,i}$\\
		 Unbiased LambdaMART \cite{Hu:19} & $\frac{\lambda_{i,\,j}}{\theta(rank_{i})t^-_{rank_{j}}}$ & $\frac{-\lambda_{j,\,i}}{\theta(rank_{j})t^-_{rank_{i}}}$\\
		 Formula \eqref{eq:lambda} & $\frac{\lambda_{i,\,j}}{\theta(rank_{i})}$ & $\frac{-\lambda_{j,\,i}}{\theta(rank_{j})}$\\
	\end{tabular}
\end{table}

The lambda-gradient formula \eqref{eq:lambda} can be alternatively viewed as if it is obtained by omitting the terms corresponding to pairs of clicked items in the gradient \eqref{eq:grad} and then applying the lambda-trick heuristic in the ``standard'' way. From this perspective, one can expect \eqref{eq:lambda} to perform better when the contribution of such pairs to the loss \eqref{eq:Lu} is small - for example, when the clicks are sparse. See also the discussion in Section~\ref{subsec:results} below.

In the next section we compare the performance of the simplified Unbiased LambdaMART given by \eqref{eq:lambda} with the original Unbiased LambdaMART in a semi-synthetic experiment.

\section{Experimental Setup and Results}\label{sec:experiments}

We performed our simulation experiments using the Yahoo! C14 Learning to Rank Challenge\footnote{\url{https://webscope.sandbox.yahoo.com/catalog.php}} dataset. This dataset consists of 29921 queries divided into three parts (train, validation, and test). Each query has an associated list of documents (of varying length). Every document is described with a feature set containing 700 features and supplied with a relevance label set by human editors \cite{CCh:10}. The relevance labels take values from 0 (\emph{irrelevant}) to 4 (\emph{highly relevant}). We used the train part for simulations and the test part for evaluation.

Our simulation setup is similar to that from \cite{Ai:18} and \cite{Hu:19}.

\subsection{Click Data Generation}

The train part of the Yahoo! C14 dataset has 19944 queries. We used the same initial rankings of the associated document lists as in \cite{Hu:19}\footnote{See \url{https://github.com/acbull/Unbiased_LambdaMart.}}. Given those initial rankings, the lists were truncated at a fixed position. We conducted several experiments with the truncation position set to 10, 20, and 30. A small number of queries that did not have any relevant documents were discarded\footnote{In our experiments we do not generate noisy clicks. Consequently, the click data generated for such queries would inevitably contain no clicks and would not be utilised by any of the methods we compare. The number of discarded queries varied between 784, 802, or 908 depending on the truncation position.}.

For each document, we generated\footnote{We used \texttt{numpy.random.default\_rng} generator with the seed set to 2022. Our source code for running the experiments is available at \url{https://github.com/zalandoresearch/pairwise-debiasing}.} a relevance indicator and an examination indicator. The click indicator was computed as the product of the two.

Relevance indicators were generated as independent binary (Bernoulli) random variables with the probability of success set to
\begin{equation*}
	P\{r_{q,\,i} = 1\} = (2^{y_{q,\,i}} - 1) / 15,
\end{equation*}
where $y_{q,\,i}$ is the manual relevance label provided in the dataset.

When generating the examination indicators we considered two user browsing models mentioned in Section~\ref{sec:examples}: continuous browsing and independent examination. In both cases, the probability of examining a given position was set as in \cite{Jo:17}, that is
\begin{equation}\label{eq:invrank}
	p_{q,\,i} = \frac{1}{rank_{q,\,i}},\quad i=1,\,\ldots,\,n_q.
\end{equation}

In the case of independent examination, the examination indicators were generated as independent Bernoulli random variables with the probability of success defined above.

In contrast, in the case of continuous browsing the examination indicators (corresponding to the same query) were dependent and had the property
\begin{equation*}
	rank_{q,\,i} \le rank_{q,\,j} \Rightarrow e_{q,\,i} \ge e_{q,\,j}\quad\forall\,q\in Q\quad\forall\,i,\,j=1,\,\ldots,\,n_q.
\end{equation*}
To achieve that we first randomly drew the last examined position $d_q\in\{1,\,\ldots,\,n_{max}\}$ where $n_{max}$ is the maximum document list length. The distribution of the last examined position $d_q$ was set to
\begin{equation*}
P\{d_q = k\} = 
	\begin{cases}
	\frac 1 {k} - \frac 1 {k+1}, & k=1,\,\ldots,\,n_{max}-1, \\
	\frac 1 {n_{max}}, & k=n_{max}.
	\end{cases}
\end{equation*}
to match \eqref{eq:invrank}. After drawing $d_q$ for each query, we set the examination indicators as follows
\begin{equation*}
	e_{q,\,i} = \begin{cases}
		1, & rank_{q,\,i} \le d_q,\\
		0, & rank_{q,\,i} > d_q
	\end{cases},\quad i=1,\,\ldots,n_q.
\end{equation*}

The generation process was repeated for each query 16 times, which formed our training data consisting of 306272 document lists with associated click indicators.

\subsection{Methods under Comparison}

In our experiments, we compared the following learning-to-rank algorithms.

\emph{LambdaMART trained on the click data.} This is LambdaMART \cite{Bu:10} fit to the click data. Similarly to \cite{Hu:19}, we consider the performance of this baseline as a lower bound. The reason is that it is trained on (biased) implicit feedback data using a machine learning algorithm that has no debiasing mechanism.

\emph{LambdaMART trained on the labeled data.} This model is LambdaMART \cite{Bu:10} trained on the ``golden'' (ground-truth) relevance labels provided in the dataset. The performance of this baseline is an upper bound since it is trained on manually labelled data, free from the position bias.

\emph{Unbiased LambdaMART.} This is the original Unbiased LambdaMART from \cite{Hu:19}. The examination propensities ($t^+$ parameters in the terminology of \cite{Hu:19}) were fixed at their true values \eqref{eq:invrank}. The estimation of $t^-$ parameters was carried out in the usual way, i.e. as part of the Unbiased LambdaMART training process. The algorithm applies additive regularisation to the $t^+$ and $t^-$ parameters (which in our case affected only the $t^-$ parameters since $t^+$ parameters were fixed at their true values). The regularisation is controlled by a hyper-parameter (denoted by $p$ in \cite{Hu:19}). We considered three values of it corresponding to no regularisation, $L_1$-regularisation, and $L_2$-regularisation, respectively.

\emph{Robust Unbiased LambdaMART.} It is the simplified version of Unbiased LambdaMART that we constructed in Section~\ref{sec:lambda}. Note that we used the true values of the observation propensities both in the simplified and in the original versions of Unbiased LambdaMART to have a fair comparison.

We did not include Regression-EM since Unbiased LambdaMART showed a better performance compared to it in the experiments from \cite{Hu:19}.

All of the models were trained using LightGBM \cite{Ke:17}. The hyper-parameters were set to the same values as in \cite{Hu:19}. In particular, the number of trees was 300, the learning rate equalled 0.05, the maximum number of leaves in a tree was 31, the feature fraction was 0.9, and the bagging fraction was set to 0.9.

\subsection{Evaluation Protocol}

We evaluated the algorithms on the test part of the dataset using the ``golden'' relevance judgements as target labels. The test part contains 6983 queries of which 248 do not have any associated documents with positive relevance labels and were excluded\footnote{The information retrieval metrics we used for evaluation are not defined for such queries.}. We did not truncate the document lists at the evaluation stage.

We evaluated the algorithms with the NDCG metric because it is the information retrieval metric we targeted when applying the lambda-trick. Specifically, we used NDCG at cutoff positions 1, 3, 5, and 10. We also report MAP for completeness.

\subsection{Experimental Results}\label{subsec:results}

The experimental results are presented in Tables~\ref{tbl:nsm} and \ref{tbl:pbm}. In each of the two tables, we report the absolute performance of unregularised Unbiased LambdaMART and the relative performance of all of the other methods (in percentages). The relative changes typeset in bold are significant at a 5\% significance level as assessed by a paired two-sided t-test with Bonferroni correction\footnote{The Bonferroni correction was applied globally, i.\,e. across all of the reported comparisons.}. We additionally checked if the performance of Robust Unbiased LambdaMART was statistically different from that of the original Unbiased LambdaMART with $L2$-regularisation. Cases where the difference is statistically significant (according to a paired two-sided t-test with a 5\% significance level) are highlighted with a frame.

Table~\ref{tbl:nsm} corresponds to the experiment with continuous examination. For that type of user behaviour, the robust version of Unbiased LambdaMART outperforms the unregularised Unbiased LambdaMART for all values of the truncation position with respect to NDCG. The uplift is especially pronounced for larger values of the truncation position (i.\,e. when the maximum training list length is bigger) and for smaller values of the cut-off position in the NDCG metric.

When the regularisation parameter in Unbiased LambdaMART gets increased the performance of it catches up with that of Robust Unbiased LambdaMART. This is expected because as the reguarisation parameter grows bigger the $t^-$ parameters in the original Unbiased LambdaMART are regularised away and the method ``converges'' to Robust Unbiased LambdaMART. However, it can be seen from Table~\ref{tbl:nsm} that the sufficient level of regularisation needed for the original Unbiased LambdaMART to perform on par with the robust version depends on the maximum length of a training list. Note that tuning the regularisation parameter on the click data using conventional validation approaches can be misleading because the click data is affected by the position bias. Instead, one would need to use an unbiased version of the validation loss (such as \eqref{eq:Lu}), similarly to \cite[Section~6.1.4]{Sa:20*}. Although the latter is a valid and feasible approach, we still consider the absence of any debiasing-related hyper-parameters in Robust Unbiased LambdaMART an advantage since it is making the method simpler.

The evaluation results for independent examination can be found in Table~\ref{tbl:pbm}. In this setting, Robust Unbiased LambdaMART performs better than the unregularised Unbiased LambdaMART for larger values of the truncation position (20 and 30). However, its performance is slightly worse than that of Unbaised LambdaMART when the maximum training list length equals 10. Our explanation is that when both the loss function from \cite{Hu:19} and the loss given by \eqref{eq:Lu} have similar unbiasedness properties (such as in the case of independent browsing) the application of the lambda-trick on top of the former can give a better result. This may be further explained by the fact that the loss from \cite{Hu:19} does not contain terms corresponding to pairs of clicked items which contribution gets eliminated by the lambda-trick. This prompts to seek an adaptation of the lambda-trick that would propagate the contribution of such pairs into the lambda-gradient. We consider this a topic for future research. Note that even in the case of independent examination the comparison outcome between the robust and the original Unbiased LambdaMART still depends on the value of the regularisation parameter for larger values of the truncation position and Unbiased LambdaMART needs to be regularised appropriately to outperform the robust version \eqref{eq:lambda}.

\begin{table*}
	\caption{Evaluation Results for the Case of Continuous Examination\\(see Section~\ref{subsec:results} for details)}\label{tbl:nsm}
	\begin{tabular}{clccccc}
		Trunc. pos. & \multicolumn{1}{c}{Method} & NDCG@1 & NDCG@3 & NDCG@5 & NDCG@10 & MAP\\
		\hline
		\hline
	\multirow{6}{*}{10} & Unbiased LambdaMART (no regularisation) & 0.684 & 0.680 & 0.702 & 0.752 & 0.880 \\
	                    & LambdaMART (click data) & {\bf -6.71\%} & {\bf -5.11\%} & {\bf -4.40\%} & {\bf -3.28\%} & {\bf -0.72\%}\\
	                    & Unbiased LambdaMART ($L_1$-regularisation) & {\bf +2.12\%} & {\bf +2.15\%} & {\bf +1.51\%} & {\bf +1.21\%} & {+0.16\%} \\
	                    & Unbiased LambdaMART ($L_2$-regularisation) & { +1.64\%} & {\bf +1.86\%} & {\bf +1.54\%} & {\bf +1.12\%} & {-0.03\%} \\
	                    & Robust Unbiased LambdaMART & { +1.01\%} & {\bf +1.75\%} & {\bf +1.38\%} & {\bf +0.96\%} & {-0.20\%}\\
	                    & LambdaMART (labelled data) & {\bf +3.93\%} & {\bf +4.41\%} & {\bf +3.65\%} & {\bf +2.82\%} & { +0.33\%} \\
	    \hline 
	 
\multirow{6}{*}{20} & Unbiased LambdaMART (no regularisation) & 0.642 & 0.651 & 0.678 & 0.732 & 0.875 \\
	                    & LambdaMART (click data) & {\bf -3.12\%} & {\bf -2.79\%} & {\bf -2.27\%} & {\bf -1.58\%} & {\bf -0.29\%}\\
	                    & Unbiased LambdaMART ($L_1$-regularisation) & {\bf +6.45\%} & {\bf +4.74\%} & {\bf +3.87\%} & {\bf +2.88\%} & {\bf +0.65\%} \\
	                    & Unbiased LambdaMART ($L_2$-regularisation) & {\bf +7.63\%} & {\bf +5.78\%} & {\bf +4.66\%} & {\bf +3.41\%} & {\bf +0.72\%} \\
	                    & Robust Unbiased LambdaMART & {\bf +8.69\%} & \framebox{\bf +6.88\%} & {\framebox{\bf +5.52\%}} & \framebox{\bf +4.09\%} & {\bf +0.74\%}\\
	                    & LambdaMART (labelled data) & {\bf +11.24\%} & {\bf +9.69\%} & {\bf +8.20\%} & {\bf +6.24\%} & {\bf +1.13\%} \\
	    \hline 

\multirow{6}{*}{30} & Unbiased LambdaMART (no regularisation) & 0.613 & 0.629 & 0.660 & 0.718 & 0.870 \\
	                    & LambdaMART (click data) & {+0.27\%} & {-0.16\%} & {-0.21\%} & {-0.06\%} & {+0.15\%}\\
	                    & Unbiased LambdaMART ($L_1$-regularisation) & {\bf +9.13}\% & {\bf +6.40\%} & {\bf +5.00\%} & {\bf +3.71\%} & {\bf +0.97\%} \\
	                    & Unbiased LambdaMART ($L_2$-regularisation) & {\bf +11.46\%} & {\bf +8.18\%} & {\bf +6.50\%} & {\bf +4.67\%} & {\bf +1.23\%} \\
	                    & Robust Unbiased LambdaMART & {\bf +13.02\%} & \framebox{\bf +10.11\%} & \framebox{\bf +7.94\%} & \framebox{\bf +5.98\%} & {\bf +1.26\%}\\
	                    & LambdaMART (labelled data) & {\bf +17.04\%} & {\bf +14.03\%} & {\bf +11.61\%} & {\bf +8.66\%} & {\bf +1.84\%} \\
	    \hline 
	\end{tabular}
		
\end{table*}

\begin{table*}
	\caption{Evaluation Results for the Case of Independent Examination\\(see Section~\ref{subsec:results} for details)}\label{tbl:pbm}
\begin{tabular}{clccccc}
		Trunc. pos. & \multicolumn{1}{c}{Method} & NDCG@1 & NDCG@3 & NDCG@5 & NDCG@10 & MAP\\
		\hline
		\hline
\multirow{6}{*}{10} & Unbiased LambdaMART (no regularisation) & 0.696 & 0.693 & 0.714 & 0.762 & 0.882 \\
	                    & LambdaMART (click data) & {\bf -7.24\%} & {\bf -5.88\%} & {\bf -5.10\%} & {\bf -3.99\%} & {\bf -0.85\%}\\
	                    & Unbiased LambdaMART ($L_1$-regularisation) & {+0.03\%} & {+0.61\%} & { +0.34\%} & {+0.16\%} & {\bf -0.28\%} \\
	                    & Unbiased LambdaMART ($L_2$-regularisation) & {-0.17\%} & {+0.44\%} & {+0.08\%} & {-0.08\%} & {\bf -0.40\%} \\
	                    & Robust Unbiased LambdaMART & {-1.37\%} & \framebox{-0.57\%} & \framebox{-0.59\%} & \framebox{\bf -0.57\%} & \framebox{\bf -0.76\%}\\
	                    & LambdaMART (labelled data) & {\bf +2.21\%} & {\bf +2.45\%} & {\bf +1.99\%} & {\bf +1.46\%} & {+0.12\%} \\
	    \hline 
	    
\multirow{6}{*}{20} & Unbiased LambdaMART (no regularisation) & 0.654 & 0.663 & 0.688 & 0.740 & 0.877  \\
	                    & LambdaMART (click data) & {\bf -2.97\%} & {\bf -3.01\%} & {\bf -2.44\%} & {\bf -1.75\%} & {\bf -0.30\%}\\
	                    & Unbiased LambdaMART ($L_1$-regularisation) & {\bf +6.45\%} & {\bf +4.80\%} & {\bf +4.12\%} & {\bf +2.98\%} & {\bf +0.48\%} \\
	                    & Unbiased LambdaMART ($L_2$-regularisation) & {\bf +6.65\%} & {\bf +5.17\%} & {\bf +4.40\%} & {\bf +3.21\%} & {\bf +0.43\%} \\
	                    & Robust Unbiased LambdaMART & {\bf +5.64\%} & {\bf +4.93\%} & {\bf +3.96\%} & {\bf +2.90\%} & \framebox{ +0.01\%}\\
	                    & LambdaMART (labelled data) & {\bf +9.09\%} & {\bf +7.57\%} & {\bf +6.54\%} & {\bf +4.98\%} & {\bf +0.89\%} \\
	    \hline

\multirow{6}{*}{30} & Unbiased LambdaMART (no regularisation) & 0.624 & 0.642 & 0.670 & 0.726 & 0.872 \\
	                    & LambdaMART (click data) & {+1.05\%} & {-0.30\%} & {-0.30\%} & {-0.14\%} & {+0.02\%}\\
	                    & Unbiased LambdaMART ($L_1$-regularisation) & {\bf +10.60}\% & {\bf +7.43\%} & {\bf +6.18\%} & {\bf +4.65\%} & {\bf +1.03\%} \\
	                    & Unbiased LambdaMART ($L_2$-regularisation) & {\bf +12.11\%} & {\bf +8.82\%} & {\bf +7.41\%} & {\bf +5.48\%} & {\bf +1.05\%} \\
	                    & Robust Unbiased LambdaMART & {\bf +11.71\%} & {\bf +8.77\%} & {\bf +7.10\%} & {\bf +5.19\%} & \framebox{\bf +0.66\%}\\
	                    & LambdaMART (labelled data) & {\bf +14.97\%} & {\bf +11.75\%} & {\bf +9.93\%} & {\bf +7.42\%} & {\bf +1.56\%} \\
	    \hline 
	\end{tabular}
		
\end{table*}

\section{Conclusion}

We advanced the theory of pairwise unbiased learning-to-rank by developing a general debiasing approach based on a minimalistic set of assumptions. We showed how our general framework can be used to construct unbiased pairwise loss functions and, consequently, unbiased learning-to-rank algorithms for different types of user behaviour. We further implemented our approach as a simplified but robust version of the Unbiased LambdaMART. Our experimental results show that this version performs better than the original algorithm when the examination of different items in the layout occurs in a dependent fashion.

One of the insights following from the theory developed in this paper is that in the presence of position bias, a learning-to-rank procedure can benefit from accounting not only for pairs with different target labels but also for pairs with the same (non-zero) target label. In the context of LambdaMART, this motivates future research aiming at adapting the lambda-trick so that it does not eliminate the contribution of such pairs to the trained model. 

Another interesting direction for future research is to combine our approach with variance reduction techniques \cite{SJo:15,SJo:15*}.

\begin{acks}
The authors would like to thank Dr. Christian Bracher from Zalando Research and Dr. Zeno Gantner for reading the draft of the paper and giving helpful feedback.
\end{acks}

\bibliographystyle{ACM-Reference-Format.bst}

\balance

\bibliography{ltr.bib}

\end{document}